\newtheorem{thm}{Theorem}[section]
\newcommand{\be}{\begin{equation}}
\newcommand{\ee}{\end{equation}}
\newcommand{\bee}{\begin{eqnarray}}
\newcommand{\eee}{\end{eqnarray}}
\newcommand{\bees}{\begin{eqnarray*}}
	\newcommand{\eees}{\end{eqnarray*}}
\newcommand{\bc}{\begin{center}}
	\newcommand{\ec}{\end{center}}
\newcommand{\bt}{\begin{table}[!h]}
	\newcommand{\et}{\end{table}}
\renewcommand\section{\@startsection {section}{1}{\z@}%
                                   {-3.5ex \@plus -1ex \@minus -.2ex}%
                                   {2.3ex \@plus.2ex}%
                                   {\normalfont\large\bfseries}}
\newtheorem{Definition}{Definition}
\newtheorem{Example}{Example}
\begin{document}
\title[Entropy generating function for past lifetime]{Entropy generating function for past lifetime and its properties}
\author[]{S\lowercase{mitha} S$^{\lowercase{a}}$ \lowercase{and} S\lowercase{udheesh} K. K\lowercase{attumannil}$^{\lowercase{b}}$\\
 $^{\lowercase{a}}$ K E C\lowercase{ollege} M\lowercase{annanam,} K\lowercase{erala}, I\lowercase{ndia},\\
 $^{\lowercase{b}}$I\lowercase{ndian} S\lowercase{tatistical} I\lowercase{nstitute},
  C\lowercase{hennai}, I\lowercase{ndia.}}
\thanks{{$^{\dag}$}{Corresponding E-mail: \tt skkattu@isichennai.res.in }}

\doublespace
\begin{abstract}
	 The past entropy is considered as an uncertainty measure for the past lifetime distribution. Generating function approach to entropy become popular in recent time as it generate several well-known entropy measures.  In this paper, we introduce the past entropy-generating function. We study certain properties of this measure. It is shown that the past entropy-generating function uniquely determines the distribution. Further, we present characterizations for some lifetime models using the relationship between reliability concepts and the past entropy-generating function.
\\\textit{keywords:}
	Past entropy generating function, Lifetime models, Characterizations.
\end{abstract}
\maketitle
\section{Introduction}
\label{sec:intro}
In many practical situations, measuring the uncertainty associated with a random variable is very important. In this connection, the notion of entropy was originally developed by Shannon (1948).  It is of fundamental importance in different areas such as Physics, Communication theory, Probability, Economics, etc. Let $X$ be an absolutely continuous non-negative random variable having cumulative distribution function $F(x)=P(X\leq x)$, probability density function $f(x)$ and survival function $\bar{F}(x)=P(X>x)$.  The Shannon's entropy is defined as
\begin{equation}\label{ent}
H(F)=-\int\limits_{0}^{\infty} {f(x)\log f(x) dx},
\end{equation}where `log' denotes the natural logarithm.
 Several extensions of (\ref{ent}) are available in literature each one being suitable for some specific situation. Some of these measures are; cumulative residual entropy (Rao et al., 2004), cumulative past entropy (Di Crescenzo and Longobardi, 2009), and the corresponding weighted and dynamic measures (Chakraborty and Pradhan, 2023).  Sudheesh et al. (2022) proposed general measures of cumulative (residual) entropy showing that several measures discussed in the literature are special cases of their proposed measures of entropy.

A convenient tool to generate moments of a probability distribution is its moment-generating function (m.g.f). It is well known that the successive derivatives of the m.g.f at point zero give the successive moments of a probability distribution, provided these moments exist. Golomb (1966)  introduced entropy generating function of a probability distribution and is given by
\begin{equation}\label{gf}
B_s(F)=\int_0^{\infty}f^s(x)dx,\quad s\geq1.
\end{equation}
The first order derivative of $B_s(F)$ evaluated at value one yields negative Shannon entropy or negative entropy of a distribution.  Guiasu and  Reischer (1985)  proposed a  relative information-generating function and they observed that its derivatives provide well-known statistical measures like the Kullback-Leibler divergence between two probability distributions. Arora and Nath (1972) proposed an inaccuracy-generating function for two probability distributions and studied its relationship with the entropy-generating function proposed by Golomb. The first derivative of this function at point one gives the inaccuracy measure by Kerridge (1961).

Recently, information-generating functions have received much attention. Clark (2019) introduced an information-generating function for point processes for determining statistics related to entropy and relative entropy.  In a series of papers, Kharazmi and Balakrishnan (2021a, 2021b, 2021c) introduced cumulative residual information generating and relative cumulative residual information generating measures, cumulative residual Fisher information and relative cumulative residual Fisher information measures, and studied their properties. In particular, Kharazmi and Balakrishnan (2021a)  proposed two new divergence measures and showed that some of the well-known information divergences such as Jensen-Shannon, Jensen-extropy and Jensen-Taneja divergence measures are all special cases of it.
For the information-generating functions associated with maximum and minimum ranked sets and record values and their properties, see Zamani et al. (2022) and Kharazmi et al. (2021).
Saha and Kayal (2023) proposed general weighted information and relative information-generating functions and studied their properties. Recently, Smitha et al. (2023) proposed a dynamic residual entropy-generating function and studied its properties. Motivated these recent development, as past life plays an important role in lifetime data analysis, we consider  entropy generating function associated with past lifetime and study its properties.

The rest of the paper is organised as follows.  In section 2, we give the definition of past entropy generating function and propose the relationship with reversed hazard rate. Various properties of this measure are also discussed. In Section 3,  we give some characterization results based on the functional relationship between past entropy-generating functions and reliability concepts. Finally, in Section 4, we give some concluding remarks.
\vspace{-0.2in}
\section{Past Entropy Generating Function and its Properties}\vspace{-0.2in}
In lifetime analysis, on several occasions, one has information about the current age of the unit under consideration. In such cases,  we define a random variable $X_t^*=X-t|X\geq t$ to study the properties of the unit.  The mean $E(X_t)$ is known as the mean residual life and it has many applications. Measuring the uncertainty about the remaining lifetime of the unit, Shannon's entropy associated with $X_t$ was developed by Ebrahimi and Pellery (1995) and  has the form
\begin{equation*}\label{2}
H(F:t)=-\int\limits_t^{\infty} {\frac{f(x)}{\bar{F}(t)}\log \frac{f{(x)}}{\bar{F}(t)} dx}.
\end{equation*}

Apart from the residual life of a unit, past life also has important applications in lifetime analysis.  In this case,  we are interested in studying the behavior of the random variable $X_t=t-X|X\leq t$.
Di Crescenzo and Longobardi (2004) have shown that in many situations, uncertainty is not necessarily related to the future but can also refer to the past. By considering such situations, they proposed the past entropy defined over $(0,t)$ and is given by
\begin{equation}\label{5}
\bar{H}(F;t)=-\int\limits_0^{t} {\frac{f(x)}{F(t)}\log \frac{f{(x)}}{F(t)} dx}.
\end{equation}
We observed that $\bar{H}(F;t)$ can be viewed as the entropy of inactivity time $t-X|X\leq t.$ This measure plays an important role in the context of information theory, forensic science, and other related fields.

The past entropy function can also be expressed in terms of the reversed hazard rate,
$\lambda(x)=\frac{f(x)}{F(x)}$, through the relationship
\begin{equation}\label{6}
\bar{H}(F;t)=1-\frac{1}{F(t)}\int_0^tf(x)\log \lambda(x)dx.
\end{equation}
For more works on entropy related to past lifetime,  we refer to Di Crescenzo and Longobardi (2004,2006), Nanda and Paul (2006), Kundu et al. (2010), Mini (2016), Nair et al. (2021), Sudheesh et al. (2022) and Chakraborty and Pradhan (2023a).  Next, we define the entropy-generating function in the past lifetime.
\begin{Definition}\vspace{-0.25in}
 Let $X$ be an absolutely continuous non-negative random variable with cumulative distribution function $F(x)=P(X\leq x)$ and probability density function $f(x)$.  Let $X_t=t-X|X\leq t$ be the past life associated with $X$. Then the past entropy generating function is given by
 \begin{equation}\label{pegf}
\bar{B_s}(F;t)=\int_0^t\left(\frac{f(x)}{F(t)}\right)^s dx,\quad s\geq1.
\end{equation}
\end{Definition}

In the following example, we show that two distributions can possess the same entropy generating function, but different past entropy generating function.
\begin{Example}
 Consider two components in a system having random lifetimes $X$ and $Y$  with distribution functions $F$ and $G$ and density functions
 $$
 f_{X}(x)=\frac{1}{2} ,\,0\leq x\leq 2
 $$
 and
 $$
 f_{Y}(y)=\frac{1}{2} ,\,2\leq y\leq4.
 $$
 Here
$$
\bar{B_s}(X)=2^{1-s}
$$
and
$$
\bar{B_s}(Y)=2^{1-s},
$$
respectively. We can observe that $\bar{B_s}(X)= \bar{B_s}(Y)=2^{1-s}$.

In this case, the past entropy-generating function is obtained as
$$
\bar{B_s}(X;t)=\frac{2^st^{1-s}}{s+1}
$$
and
$$
\bar{B_s}(Y;t)=\frac{2^s[1-(1-t)^{s+1}]}{(s+1)(2t-t^2)^{s}}.
$$
This result implies that the expected uncertainty regarding the predictability of the outcomes of $X$
and $Y$ in terms of entropy generating functions are identical for $F$ and $G$. However,  the past entropy-generating functions of $X$  and  $Y$  are different.
\end{Example}
Next, we study the properties of the past entropy-generating function. \\
\noindent\textbf{Property 1:} The first derivative of $\bar{B_s}(F;t)$ at $s=1$ with negative sign is the past entropy measure proposed by DiCrescenzo and Longobardi (2004).
\\\textbf{Property 2:} For any absolutely continuous random variable $X,$ define $Z=aX+b,$ where $a>0,$ and $b\geq0,$ are constants. Then, for $t>b$
$$\bar{B_s}(Z;t)=a^{1-s}.\bar{B}_s\left(X;\frac{t-b}{a}\right).$$
\begin{proof}
	\begin{eqnarray*}
		\bar{B_s}(Z;t)&=&\int_0^t\left(\frac{f_Z(x)}{F_Z(t)}\right)^s dx
		\\&=&\frac{1}{\left[F\left(\frac{t-b}{a}\right)\right]^s}\int_0^t\left(\frac{1}{a}f\left(\frac{x-b}{a}\right)\right)^s dx
		\\&=&a^{1-s}\int_0^{\frac{t-b}{a}}\left(\frac{f(u)}{F\left(\frac{t-b}{a}\right)}\right)^s du
		\\&=&a^{1-s}\bar{B}_s\left(X;\frac{t-b}{a}\right).
	\end{eqnarray*}
	Hence the proof.
\end{proof}
\newpage
\noindent From Property 2, we have the following identity.
	\begin{enumerate}
		\item If $b=0,$ $\bar{B}_s(aX;t)=a^{1-s}\bar{B_s}(X;\frac{t}{a})$
		\item If $a=1,$ $\bar{B}_s(X+b;t)=\bar{B_s}(X;t-b)$
	\end{enumerate}
\noindent \textbf{Property 3:} The relationship between  $\bar{B_s}(F;t)$ and reversed hazard rate is given by
\begin{equation}\label{8}
\bar{B_s}^{\prime}(F;t)=\left(\lambda(t)\right)^s-s.\bar{B_s}(F;t)\lambda(t),
\end{equation}where $\bar{B_s}^{\prime}(F;t)$ is the derivative of $\bar{B_s}(F;t)$ with respect to  $t$.

\section{Characterization Results}
In this section, we establish several characterization results associated with past entropy-generating functions.  The following result shows that the past entropy-generating function uniquely determines the underlying distribution function.
\begin{thm}
	Let $F(x)$ be an absolutely continuous distribution function and assume that $\bar{B_s}(F;t)$ is increasing in $t.$ Then $\bar{B_s}(F;t)$ uniquely determines $F(t).$
\end{thm}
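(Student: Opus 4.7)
The plan is to leverage Property 3, namely
\[
\bar{B_s}^{\prime}(F;t)=(\lambda(t))^s-s\,\bar{B_s}(F;t)\,\lambda(t),
\]
to recover the reversed hazard rate $\lambda(t)=f(t)/F(t)$ pointwise in $t$ from knowledge of $\bar{B_s}(F;t)$ and its derivative. Rearranged, this identity says that at each fixed $t>0$ the value $\lambda(t)$ must be a positive solution of
\[
g_t(x):=x^s-s\,\bar{B_s}(F;t)\,x-\bar{B_s}^{\prime}(F;t)=0.
\]
Since $\bar{B_s}(F;t)$ is increasing in $t$ by hypothesis, one has $\bar{B_s}^{\prime}(F;t)\geq 0$, so $g_t(0)=-\bar{B_s}^{\prime}(F;t)\leq 0$.

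Next I would establish uniqueness of the positive root of $g_t$ by a short monotonicity analysis (taking $s>1$, as the case $s=1$ makes the identity trivial). Differentiating, $g_t^{\prime}(x)=s x^{s-1}-s\,\bar{B_s}(F;t)$, which vanishes only at $x_\ast=\bar{B_s}(F;t)^{1/(s-1)}$. Hence $g_t$ is strictly decreasing on $[0,x_\ast]$ and strictly increasing on $[x_\ast,\infty)$, with $g_t(x)\to\infty$ as $x\to\infty$. Combined with $g_t(0)\leq 0$, this forces $g_t(x_\ast)\leq 0$ and leaves exactly one positive root, which must be $\lambda(t)$.

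Once $\lambda(t)$ is identified for every $t>0$, the distribution function is recovered from the standard reversed-hazard-rate inversion
\[
F(t)=\exp\!\left(-\int_t^{\infty}\lambda(u)\,du\right),
\]
which is valid because $\lambda(t)=\tfrac{d}{dt}\log F(t)$ and $F(\infty)=1$. This yields $F$ uniquely from $\bar{B_s}(F;t)$.

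The main obstacle is precisely the uniqueness of the positive root of $g_t$, and this is exactly where the assumption that $\bar{B_s}(F;t)$ is increasing is used: were $\bar{B_s}^{\prime}(F;t)$ allowed to be negative, then $g_t(0)>0$ would be possible and $g_t$ could admit two distinct positive roots, so $\lambda(t)$ would not be determined by the data at $t$. Every other step---Property 3, solving $g_t$, and inverting $\lambda$ to $F$---is either an algebraic manipulation already in hand or a standard reliability identity.
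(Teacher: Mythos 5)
Your argument is correct, and it reaches the conclusion by a genuinely different (and in one respect cleaner) route than the paper. The paper compares two distributions $F$ and $G$ with $\bar{B_s}(F;t)=\bar{B_s}(G;t)$, differentiates to get the identity of Property~3 for both, and then runs a proof by contradiction on the ratio $\lambda_1(t)/\lambda_2(t)$; the step where it passes from $\lambda_1/\lambda_2>1$ to an inequality between the numerator and denominator of \eqref{11} tacitly requires $(\lambda_1(t))^{s-1}-s\bar{B_s}(F;t)>0$, and this is exactly the sign information supplied by the hypothesis that $\bar{B_s}(F;t)$ is increasing (equivalently $\bar{B_s}^{\prime}(F;t)\ge 0$), though the paper never says so. Your version makes this explicit and constructive: at each $t$ the pair $\bigl(\bar{B_s}(F;t),\bar{B_s}^{\prime}(F;t)\bigr)$ forces $\lambda(t)$ to be a positive root of $g_t(x)=x^s-s\bar{B_s}(F;t)x-\bar{B_s}^{\prime}(F;t)$, and your unimodality analysis ($g_t$ decreasing then increasing, with $g_t(0)=-\bar{B_s}^{\prime}(F;t)\le 0$) shows there is exactly one positive root, so $\lambda$ and hence $F$ are recovered outright rather than by contradiction. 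Two small caveats you should state: the identification of $\lambda(t)$ with \emph{the} positive root needs $\lambda(t)>0$ on the interior of the support (the paper likewise assumes $\lambda_i(t)\neq 0$), and the case $s=1$ must be excluded since $\bar{B}_1(F;t)\equiv 1$ carries no information (so the theorem is really about $s>1$, or about knowing the generating function for all $s\ge 1$). With those provisos your proof is complete and, if anything, closes a gap the paper leaves open.
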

\begin{proof}
	Suppose that $F(x)$ and $G(x)$ are distribution functions such that
	\begin{equation}\label{9}
	\bar{B_s}(F;t)=\bar{B_s}(G;t),~\forall~t\geq 0.
	\end{equation}
	That is,
	\begin{equation}\label{10}
	\frac{1}{\left(F(t)\right)^s}\int_0^tf^s(x)dx=	\frac{1}{\left(G(t)\right)^s}\int_0^tg^s(x)dx,~\forall~s\geq1,
	\end{equation}
	where $f(x)$ and $g(x)$ are probability generating functions corresponding to $F(x)$ and $G(x)$, respectively. Differentiating \eqref{10} with respect to  $t$, we obtain
	\begin{eqnarray*}
	    (F(t))^{-s}(f(t))^s&+&(-s)(F(t))^{-s-1}f(t)\int_0^tf^s(x)dx\\&=&(G(t))^{-s}(g(t))^s+(-s)(G(t))^{-s-1}g(t)\int_0^tg^s(x)dx.
	\end{eqnarray*}This implies
	$$(\lambda_1(t))^s-s\lambda_1(t)\bar{B_s}(F;t)=(\lambda_2(t))^s-s\lambda_2(t)\bar{B_s}(G;t),$$
	where $\lambda_1(t)$ and $\lambda_2(t)$ are the reversed hazard rates of $f(x)$ and $g(x)$ respectively. Note that the reversed hazard rate determines the distribution, uniquely. Hence to prove $F(t)=G(t),$ it is enough to prove $\lambda_1(t)=\lambda_2(t),~\forall~t\geq0.$

	Suppose we have the inequality $$\lambda_1(t)>\lambda_2(t),\quad \lambda_i(t)\not=0,~i=1,2.$$ From \eqref{10} we obtain
	$$\lambda_1(t)\left[(\lambda_1(t))^{s-1}-s\bar{B_s}(F;t)\right]=\lambda_2(t)\left[(\lambda_2(t))^{s-1}-s\bar{B_s}(G;t)\right].$$
	That is
	\begin{equation}\label{11}
	\frac{\lambda_1(t)}{\lambda_2(t)}=\frac{(\lambda_2(t))^{s-1}-s\bar{B_s}(G;t)}{(\lambda_1(t))^{s-1}-s\bar{B_s}(F;t)}
	\end{equation}
	$$\frac{\lambda_1(t)}{\lambda_2(t)}>1\implies(\lambda_2(t))^{s-1}-s\bar{B_s}(G;t)>
	(\lambda_1(t))^{s-1}-s\bar{B_s}(F;t)$$
	Using \eqref{9}, the above equation gives
	$$(\lambda_2(t))^{s-1}>(\lambda_1(t))^{s-1}\implies\lambda_2(t)>\lambda_1(t).$$
	This is a contradiction. Similarly, we can show that the inequality $\lambda_1(t)<\lambda_2(t)$ also leads to a contradiction. Therefore, we have  $\lambda_1(t)=\lambda_2(t).$
\end{proof}	
\noindent Now we give the characterization results for uniform and power function distributions using the functional relationship between the past entropy generating function and reversed hazard rate.
\begin{thm}
	Let  $X$ be a continuous random variable with absolutely continuous distribution function $F(x)$ and reversed hazard rate $\lambda(t).$ Then the past entropy generating function $\bar{B_s}(F;t)$ has the form
	\begin{equation}\label{12}
	\bar{B_s}(F;t)=(\lambda(t))^{s-1}~;~s>1
	\end{equation}
	if and only if $X$ follows uniform distribution with distribution function
	\begin{equation}\label{13}
	F(x)=\frac{x-a}{b-a}~;~a\leq x\leq b.\end{equation}
\end{thm}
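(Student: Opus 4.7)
The plan is to handle the two directions separately, with the forward direction being a direct substitution and the reverse direction relying critically on Property 3 (equation (\ref{8})) to produce an ordinary differential equation that pins down $\lambda(t)$ up to one parameter.

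For the \emph{sufficiency} (uniform $\Rightarrow$ formula), I would simply substitute $f(x)=1/(b-a)$ and $F(t)=(t-a)/(b-a)$ into the definition \eqref{pegf}, getting $\bar{B_s}(F;t)=(t-a)^{1-s}$, and then verify that the reversed hazard rate is $\lambda(t)=f(t)/F(t)=1/(t-a)$, so that $(\lambda(t))^{s-1}=(t-a)^{1-s}$ matches. This is routine.

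The real content is in the \emph{necessity} direction. Assuming $\bar{B_s}(F;t)=(\lambda(t))^{s-1}$, I would differentiate both sides with respect to $t$ and compare with Property 3: the left-hand derivative is $(s-1)(\lambda(t))^{s-2}\lambda'(t)$, while equation \eqref{8} gives $\bar{B_s}'(F;t)=(\lambda(t))^s - s(\lambda(t))^{s-1}\lambda(t)=(1-s)(\lambda(t))^s$. Equating these and cancelling the factor $(s-1)$ (using $s>1$) yields the first-order ODE
\begin{equation*}
\lambda'(t)=-(\lambda(t))^{2}.
\end{equation*}
Separating variables gives $\lambda(t)=1/(t+c)$ for some constant. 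Writing $c=-a$, I have $\lambda(t)=1/(t-a)$.

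Finally, I would recover $F$ from the reversed hazard rate via the identity $\lambda(t)=(\log F(t))'$, obtaining $\log F(t)=\log(t-a)+k$, i.e.\ $F(t)=C(t-a)$ on its support. The normalising condition $F(b)=1$ fixes $C=1/(b-a)$, giving \eqref{13}. The main subtlety I expect is the bookkeeping around constants of integration and the support endpoints: I need $F$ to be a bona fide distribution function (so $\lambda$ must be positive on the support, forcing $a$ to be the left endpoint), and the right endpoint $b$ is determined only by the normalisation $F(b)=1$, which is exactly why a whole two-parameter family of uniform distributions appears in the conclusion.
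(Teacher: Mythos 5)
Your proposal is correct and follows essentially the same route as the paper: differentiate the assumed identity, compare with the relation $\bar{B_s}^{\prime}(F;t)=(\lambda(t))^s-s\bar{B_s}(F;t)\lambda(t)$ from Property~3, reduce to $\frac{d}{dt}\bigl(\frac{1}{\lambda(t)}\bigr)=1$, and recover the uniform law from its reversed hazard rate. The only difference is cosmetic — you write out the sufficiency computation and the recovery of $F$ from $\lambda$ explicitly, where the paper compresses these into "by direct computation" and an appeal to the uniqueness of the reversed hazard rate.
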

\begin{proof}
	Suppose $X\sim U[a,b],$  by direct computation  of $\bar{B_s}(F;t)$  we have the necessary part. Conversely, assume \eqref{12} holds. On differentiating both sides of the equation (\ref{12}) with respect to $t$, we obtain
	\begin{equation}\label{14}
	\bar{B_s}^{\prime}(F;t)=(s-1)(\lambda(t))^{s-2}\lambda^{\prime}(t).
	\end{equation}
	Using the relationship between  $\bar{B_s}(F;t)$ and reversed hazard rate given in \eqref{8}, (\ref{14}) becomes
 \begin{equation}\label{121}
     (\lambda(t))^s-s\bar{B_s}(F;t)\lambda(t)=(s-1)(\lambda(t))^{s-2}\lambda^{\prime}(t).
 \end{equation}
 Substituting (\ref{12}) in \eqref{121}, we obtain
	\begin{eqnarray*}
		(\lambda(t))^s-s(\lambda(t))^s=(s-1)(\lambda(t))^{s-2}\lambda^{\prime}(t).
	\end{eqnarray*}Or
 \begin{equation*}
     (\lambda(t))^s(1-s)=(s-1)(\lambda(t))^{s-2}\lambda^{\prime}(t).
 \end{equation*}
	That is, $$\frac{\lambda^{\prime}(t)}{(\lambda(t))^2}=-1,$$
	which can be written as $$\frac{d}{dt}\left(\frac{1}{\lambda(t)}\right)=1.$$
	Integrating both sides of the above equation in the interval $(0,t)$, we obtain the $\lambda(t)$ as the reversed hazard rate of $U(a,b)$ distribution. Since the reversed hazard rate determines the distribution uniquely, we have the model in (\ref{13}).
 \end{proof}	

\noindent Next, we characterize power distribution by using the functional relationship between past entropy generating and reversed hazard rate. The result is stated as follows.
\begin{thm}
	Let $X$ be a non-negative random variable admitting an absolutely continuous distribution function $F(x)$.  The relationship
	\begin{equation}\label{15}
	\bar{B_s}(F;t)=k(\lambda(t))^{s-1}~;~s>1,\,k \in R,
	\end{equation}
	 holds for all real $t\geq0,$ if and only if $X$ has power distribution with distribution function
	$$F(x)=x^c~;~c>0,~0<x<1.$$
\end{thm}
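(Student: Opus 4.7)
The plan is to mirror the strategy used for Theorem 2: verify necessity by direct substitution, and establish sufficiency by converting the functional equation into a first-order ODE for the reversed hazard rate. For the necessity, I would compute $\bar{B}_s(F;t)$ directly for $F(x)=x^c$ on $(0,1)$: a short calculation gives $\bar{B}_s(F;t)=c^s t^{1-s}/(s(c-1)+1)$, and since $\lambda(t)=c/t$ one has $(\lambda(t))^{s-1}=c^{s-1}t^{1-s}$, so \eqref{15} holds with $k=c/(s(c-1)+1)$.

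For the converse, I would differentiate both sides of \eqref{15} with respect to $t$ to obtain $\bar{B}_s'(F;t)=k(s-1)(\lambda(t))^{s-2}\lambda'(t)$, then replace the left-hand side using Property 3 (equation \eqref{8}) together with \eqref{15} itself substituted on the right-hand side of \eqref{8}. The $s\bar{B}_s(F;t)\lambda(t)$ contribution becomes $sk(\lambda(t))^s$, so the two $(\lambda(t))^s$ terms combine; after dividing by $(\lambda(t))^{s-2}$ (valid since $\lambda(t)\neq 0$), the equation reduces to
$$
(1-sk)(\lambda(t))^2=k(s-1)\lambda'(t),
$$
equivalently $\frac{d}{dt}\!\left(\frac{1}{\lambda(t)}\right)=\frac{sk-1}{k(s-1)}=:M$, which is constant in $t$.

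Integrating on $(0,t)$ gives $1/\lambda(t)=Mt+C$. To pin down $C$, I would use the boundary behaviour at the left endpoint of the support: since $X$ is non-negative with $F(0^+)=0$ and $f(0^+)>0$, $\lambda(t)=f(t)/F(t)\to\infty$ as $t\to 0^+$, hence $1/\lambda(0^+)=0$ and $C=0$. Therefore $\lambda(t)=(1/M)/t$, which is precisely the reversed hazard rate of $F(x)=x^{1/M}$ on $(0,1)$. A direct algebraic check reconciles $M=(sk-1)/(k(s-1))$ with $k=c/(s(c-1)+1)$ where $c=1/M$, matching the constant computed in the forward direction. Appealing to the fact that the reversed hazard rate uniquely determines the distribution---already invoked in Theorems 1 and 2---then gives $F(x)=x^c$ on $(0,1)$.

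I expect the main obstacle to be the same subtlety that was tacitly handled in the proof of Theorem 2: fixing the constant of integration so that the ODE produces $\lambda(t)=c/t$ rather than the more general $\lambda(t)=1/(Mt+C)$, which would correspond to a shifted power family instead of the pure power distribution on $(0,1)$. I would therefore make the boundary assumption $\lambda(0^+)=\infty$ explicit, and note that the constant $k$ in \eqref{15} is permitted to depend on $s$, which is consistent with the statement ``$k\in\mathbb{R}$'' rather than requiring $k$ to be $s$-independent.
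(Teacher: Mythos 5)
Your proposal is correct and follows essentially the same route as the paper: differentiate \eqref{15}, substitute via the identity \eqref{8} to reduce to $\frac{d}{dt}\bigl(1/\lambda(t)\bigr)=\frac{sk-1}{k(s-1)}$, and integrate. You are in fact slightly more careful than the paper, which silently drops the constant of integration that you fix via the boundary condition $\lambda(0^{+})=\infty$ (and whose displayed intermediate exponents $(\lambda(t))^{s-1}$ and $(\lambda(t))^{s-3}$ are off by one from your correct $(\lambda(t))^{s}$ and $(\lambda(t))^{s-2}$, though the resulting ODE is the same).
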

\begin{proof}
	By direct computation we get \eqref{15}. Conversely assume that \eqref{15} holds. Differentiating \eqref{15} and using the relationship between  $\bar{B_s}(F;t)$ and reversed hazard rate given in \eqref{8}, we have
	\begin{eqnarray*}
		(\lambda(t))^{s-1}-sk(\lambda(t))^{s-1}&=&k(s-1)(\lambda(t))^{s-3}\lambda^{\prime}(t)
		\\(1-sk)(\lambda(t))^{s-1}&=&k(s-1)(\lambda(t))^{s-3}\lambda^{\prime}(t)
		\\-\frac{\lambda^{\prime}(t)}{(\lambda(t))^2}&=&\frac{sk-1}{k(s-1)}.
	\end{eqnarray*}The above equation can be written as
 \begin{equation*}
     \frac{d}{dt}\frac{1}{\lambda(t)}=\frac{sk-1}{k(s-1)}.
 \end{equation*}
	Solving the above differential equation we obtain $$\lambda(t)=\frac{c}{t},$$ where $c=\frac{k(s-1)}{sk-1}$, which is the reversed hazard rate of the power distribution.
	\end{proof}
 \begin{thm}
     A non-negative random variable $X$ has an exponential distribution with mean $\mu$ if and only if we have the following relationship holds:
     \begin{equation}\label{exp}
	\bar{B_s}(F;t)=\mu(\lambda(t))^s(\frac{e^{st/\mu}-1}{s}),
	\end{equation}
 \end{thm}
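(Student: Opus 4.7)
The proof splits into the two standard directions. For the necessity, I would substitute the exponential density $f(x)=(1/\mu)e^{-x/\mu}$ and its distribution function $F(t)=1-e^{-t/\mu}$ directly into the definition in (\ref{pegf}), evaluate $\int_{0}^{t}f^{s}(x)\,dx=(1-e^{-st/\mu})/(s\mu^{s-1})$, divide by $F(t)^{s}$, and then multiply numerator and denominator by $e^{st/\mu}$ to recognise the expression as $\mu\,\lambda(t)^{s}\cdot(e^{st/\mu}-1)/s$, where $\lambda(t)=1/[\mu(e^{t/\mu}-1)]$ is the reversed hazard rate of the exponential. This is a short algebraic verification.

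For the sufficiency, I would depart from the strategy used in Theorems 1.2 and 1.3 (where (\ref{8}) is combined with the hypothesis). Plugging the assumed form of $\bar{B_s}(F;t)$ into (\ref{8}) leaves $\lambda(t)$ as the unknown and yields a nonlinear (Riccati-type) equation in $\lambda$, which is awkward to solve. It is much cleaner to work with $f$ directly. Using $\lambda(t)=f(t)/F(t)$, the hypothesis (\ref{exp}) is equivalent, after multiplying both sides by $F(t)^{s}$, to the integral identity
\begin{equation*}
\int_{0}^{t}f^{s}(x)\,dx \;=\; \mu\, f(t)^{s}\cdot\frac{e^{st/\mu}-1}{s}.
\end{equation*}
Differentiating both sides with respect to $t$ and applying the product rule on the right gives
\begin{equation*}
f(t)^{s} \;=\; \mu\, f(t)^{s-1} f'(t)\,(e^{st/\mu}-1) \;+\; f(t)^{s}\,e^{st/\mu}.
\end{equation*}
The common factor $(e^{st/\mu}-1)\,f(t)^{s-1}$ is strictly positive for $t>0$ on the support of $f$, and cancelling it leaves the elementary first-order linear ODE $\mu f'(t)+f(t)=0$. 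Its solution is $f(t)=Ce^{-t/\mu}$, and the normalisation $\int_{0}^{\infty}f=1$ forces $C=1/\mu$, so $X\sim\mathrm{Exp}(\mu)$.

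I expect no genuine analytic difficulty. The only points needing a line of justification are the interchange of differentiation and integration on the left (immediate from absolute continuity of $F$) and the division by $f(t)^{s-1}$ (valid on the support of $f$). A useful bookkeeping remark is that the reduction works for any single fixed $s>1$, so (\ref{exp}) need be postulated only for one such value rather than for all $s$ simultaneously; the rest of the computation is entirely mechanical.
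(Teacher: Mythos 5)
Your proposal is correct and, despite your remark about departing from the strategy of the earlier theorems, it follows essentially the same route as the paper: the paper's own converse also multiplies through by $F(t)^{s}$ to get the integral identity $\int_{0}^{t}f^{s}(x)\,dx=\frac{\mu}{s}f^{s}(t)(e^{st/\mu}-1)$, differentiates in $t$, and reduces to $f'(t)/f(t)=-1/\mu$. Your write-up merely makes explicit the cancellation of $(e^{st/\mu}-1)f(t)^{s-1}$ and the normalisation of the constant, which the paper compresses into ``simplifying.''
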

 \begin{proof}
    Let $X$ follows an exponential distribution with mean $\mu$. BY direct calculation we obtain
    $$\bar{B_s}(F;t)=\frac{\mu}{s}\left(\frac{1}{\mu^s(1-e^{-t/\mu})^s}-\left(\frac{e^{-t/\mu}}{\mu(1-e^{-t/\mu})}\right)^s\right).$$In term of the reversed hazard rate, this reduces to the expression given in (\ref{exp}).\\
    Now, suppose (\ref{exp}) holds. Using the definition of $\bar{B_s}(F;t)$ and $\lambda(t)$, we obtain the identity
    $$\int_{0}^{t}f^s(x)dx=\frac{\mu}{s}f^s(t)(e^{st/\mu}-1).$$
    Differentiating both sides with respect to $t$ and simplifying we  obtain
    $$\frac{f'(t)}{f(t)}=\frac{-1}{\mu}.$$ Solving the above differential equation we obtain $f(x)=\frac{1}{\mu}e^{-t/\mu}$, which proves the result.
 \end{proof}
	Next, we consider the case where the random variable $X$ has support $(-\infty,\infty).$  Kundu et al. (2010)  defined a past entropy measure for the support $(a,b),$ $-\infty\leq a<b\leq \infty,$  and studied its property. Let $X$ be a real valued absolutely continuous  random variable having support $(-\infty,\infty)$, then   the past entropy generating function is defined as
	\begin{equation}\label{16}
	\bar{B_s}(F;t)=\int\limits_{-\infty}^t\left(\frac{f(x)}{F(t)}\right)^sdx,~s\ge 1.
	\end{equation}
	For the general support $(a,b),$ where $a=\inf\{x:F(x)>0\}$ and $b=\sup\{x:F(x)<1\},$ $-\infty\leq a<b\leq\infty,$ the past entropy generating function is defined as
	\begin{equation}\label{17}
	\bar{B_s}(F;t)=\int\limits_{a}^t\left(\frac{f(x)}{F(t)}\right)^sdx,~s\ge 1.
	\end{equation}
	If $a=0,$ then \eqref{17} is identical to the past entropy generating function defined in \eqref{pegf}.

The next theorem shows the constancy of the past entropy-generating function.
\begin{thm}
  If $X$ is a random variable admitting an absolutely continuous  distribution function $F(x)$, then the past entropy generating function is independent of  $t$ if and only if the distribution $X$ is specified by
\begin{equation}\label{18a}
F(x)=\exp{[a(x-b)]} ;-\infty \leq x \leq b.
 \end{equation}
\end{thm}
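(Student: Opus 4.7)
The plan is to exploit the differential identity (\ref{8}) between $\bar{B_s}(F;t)$ and the reversed hazard rate $\lambda(t)$, which (as noted in Property 3) holds on the full support, and then integrate to recover $F$ from a constant $\lambda$.

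For the necessity direction, I would assume that $\bar{B_s}(F;t)$ is independent of $t$, so $\bar{B_s}^{\prime}(F;t)\equiv 0$. Substituting this into (\ref{8}) yields
\begin{equation*}
(\lambda(t))^s - s\,\bar{B_s}(F;t)\,\lambda(t)=0,
\end{equation*}
and since $\lambda(t)\neq 0$ on the support, dividing through gives $(\lambda(t))^{s-1}=s\bar{B_s}(F;t)$. The right-hand side is a constant (by assumption), and since $s>1$ we conclude that $\lambda(t)$ itself is a positive constant, say $\lambda(t)\equiv a>0$ (positivity being forced by the requirement that $F(t)\to 0$ as $t\to-\infty$).

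Next I would recover $F$ from the constancy of $\lambda$. Writing $\lambda(x)=\frac{d}{dx}\log F(x)=a$ and integrating on $(x,b)$, where $b=\sup\{x:F(x)<1\}$, one obtains $\log F(b)-\log F(x)=a(b-x)$. Using $F(b)=1$ this gives $F(x)=\exp[a(x-b)]$ for $-\infty\leq x\leq b$, which is (\ref{18a}). The uniqueness step relies on the standard fact that the reversed hazard rate determines the distribution, a fact already invoked in the earlier characterization proofs.

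For the sufficiency direction, I would just plug $F(x)=\exp[a(x-b)]$ and $f(x)=a\exp[a(x-b)]$ into (\ref{16}) and compute: the factor $\exp[as(t-b)]$ from the denominator $F(t)^s$ cancels exactly with the antiderivative $\int_{-\infty}^t a^s\exp[as(x-b)]\,dx=\frac{a^{s-1}}{s}\exp[as(t-b)]$, leaving $\bar{B_s}(F;t)=\frac{a^{s-1}}{s}$, manifestly independent of $t$. The main obstacle, if any, is minor: one needs to be careful that the integrals in the generalized past entropy generating function (\ref{16}) on the support $(-\infty,b]$ converge and that the derivation of (\ref{8}) in Property 3 carries over verbatim to this support, which it does since (\ref{8}) follows purely from differentiating under the integral sign together with the fundamental theorem of calculus.
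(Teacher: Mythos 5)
Your proposal is correct and follows essentially the same route as the paper: both directions rest on setting $\bar{B_s}^{\prime}(F;t)=0$ in the identity \eqref{8}, concluding that $\lambda(t)$ is constant, and then recovering $F(x)=\exp[a(x-b)]$, with the converse verified by the same direct computation giving $\bar{B_s}(F;t)=a^{s-1}/s$. The only (cosmetic) difference is that where the paper cites Theorem 2.3 of Nair et al.\ (2019) for the step ``constant reversed hazard rate implies \eqref{18a}'', you integrate $\frac{d}{dx}\log F(x)=a$ over $(x,b)$ yourself, which makes your argument slightly more self-contained.
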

\begin{proof}
Let $\bar{B_s}(F;t)=k $, where $k$ is a positive constant. Then
$$\bar{B_s}^{\prime}(F;t)=0.
$$
Upon using equation (\ref{8}), we obtain
$$
({\lambda(t)})^s- sk \lambda(t) =0.
$$Hence, $\lambda(t)=({s k})^{\frac{1}{s-1}}$,  a constant. By Theorem 2.3 of Nair et al. (2019), the reversed hazard rate is independent of $t$, if the distribution function  of $X$ is specified by  $$F(x)=\exp{[a(x-b)]};-\infty \leq x \leq b.$$

Conversely, assume $X$ has a distribution function specified by (\ref{18a}). Then by direct computation,  we have
$\bar{B_s}(F;t)= \frac{a^s}{as},$
which is independent  of  $t$. This proves the theorem.
\end{proof}
\noindent Next, we give a characterization theorem for generalized power function distribution.
\begin{thm}
	Let $X$ be a continuous random variable with absolutely continuous distribution function $F(x)$ and mean inactivity time $m(t)=E(t-X|X\le t).$ Then $X$ follows generalized power function distribution specified by \begin{equation}\label{18}
	F(x)=\left(\frac{cx+d}{cb+d}\right)^{\frac{1-c}{c}}~;~-\frac{d}{c}<t<b~;~0<c<1
	\end{equation}
	if and only if
	\begin{equation}\label{19}
	\bar{B_s}(F;t)=k(m(t))^{1-s}.
	\end{equation}
\end{thm}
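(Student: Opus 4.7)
The plan is to proceed in the two standard directions for such a characterization. For the \emph{necessity} part (only if), I would substitute the generalized power function distribution \eqref{18} directly into the definitions of $\bar{B}_s(F;t)$ and $m(t)$ and check \eqref{19} by routine integration. One first computes the reversed hazard rate $\lambda(t)=(1-c)/(ct+d)$, then, using $m(t)=\int_{-d/c}^{t}F(x)\,dx/F(t)$, shows $m(t)=ct+d$. The same substitution $u=cx+d$ in $\int f^{s}(x)\,dx$ collapses $\bar{B}_s(F;t)$ to a constant multiple of $(ct+d)^{1-s}=(m(t))^{1-s}$, with an explicit $k=(1-c)^{s}/(s(1-2c)+c)$.

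For the \emph{sufficiency} part, the essential tool is Property~3, i.e.\ the identity \eqref{8}: $\bar{B}_s^{\prime}(F;t)=(\lambda(t))^{s}-s\,\bar{B}_s(F;t)\,\lambda(t)$. Differentiating the assumed relation \eqref{19} gives $\bar{B}_s^{\prime}(F;t)=k(1-s)(m(t))^{-s}m^{\prime}(t)$, while differentiating the standard identity $m(t)F(t)=\int_{0}^{t}F(x)\,dx$ yields the auxiliary relation $m^{\prime}(t)=1-m(t)\lambda(t)$. I would substitute both into \eqref{8} and multiply through by $(m(t))^{s}$; after cancellation, all $t$-dependence should disappear and the equation should reduce to a single polynomial identity in $u(t):=m(t)\lambda(t)$, of the form
\begin{equation*}
u^{s}-k(2s-1)u-k(1-s)=0.
\end{equation*}

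Since this polynomial has no explicit $t$, the quantity $u(t)=m(t)\lambda(t)$ must be a constant, say $\gamma$. Then $m^{\prime}(t)=1-\gamma$, so $m(t)$ is linear in $t$; writing $m(t)=(1-\gamma)t+d$ and setting $c=1-\gamma$ recovers $m(t)=ct+d$. Consequently $\lambda(t)=\gamma/m(t)=(1-c)/(ct+d)$, and integrating $\lambda(t)=(\log F(t))^{\prime}$ from the lower endpoint up to $t$, with the normalization $F(b)=1$ pinning the constant of integration, produces exactly \eqref{18}. The argument closes because the reversed hazard rate determines the distribution uniquely.

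The main obstacle I expect is the algebraic reduction in the sufficiency step: one must combine the differentiated form of \eqref{19} with \eqref{8} and with $m^{\prime}=1-m\lambda$ in such a way that every residual $t$-dependence is absorbed into the single variable $u=m\lambda$. Once that reduction is carried out cleanly and $u$ is recognized as a root of a $t$-independent polynomial, the remainder of the proof is the routine integration of a first-order linear ODE for $F(t)$ together with the boundary condition at $t=b$.
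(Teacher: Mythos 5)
Your proposal follows essentially the same route as the paper: necessity by direct integration (and your constant $k=(1-c)^{s}/(s(1-2c)+c)$ is in fact the correct normalization), and sufficiency by differentiating \eqref{19}, combining it with \eqref{8} and the identity $m^{\prime}(t)=1-m(t)\lambda(t)$, and reducing to a $t$-free algebraic equation — your polynomial in $u=m\lambda$ is exactly the paper's equation \eqref{21} rewritten via $m^{\prime}=1-u$ — whence $m^{\prime}$ is constant and $m$ is linear. The only cosmetic difference is the final step: the paper invokes the Elbatal et al.\ (2012) characterization of distributions with linear mean inactivity time, whereas you recover $F$ directly by integrating the reversed hazard rate with the boundary condition $F(b)=1$; both close the argument.
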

\begin{proof}
	Suppose $X$ follows generalized power function distribution specified in \eqref{18}. Then by direct computation we obtain
	$$\bar{B_s}(F;t)=\frac{\left(c(1-c)\right)^s}{(1-2c)^{s+1}}(ct+d)^{1-s}.$$
	That is,
	$$\bar{B_s}(F;t)=k(m(t))^{1-s}.$$
	Conversely, assume that \eqref{19} holds.  On differentiating both sides of the above equation and using the relationship given in   \eqref{8}, we obtain
	\begin{equation}\label{20}
	(\lambda(t))^s-ks(m(t))^{1-s}\lambda(t)=k(1-s)(m(t))^{-s}m^{\prime}(t).
	\end{equation}
	Multiplying by $(m(t))^s$  through out the equation \eqref{20}  and upon using the relationship  $\frac{1-m^{\prime}(t)}{m(t)}=\lambda(t)$, we obtain
	\begin{equation}\label{21}
	(1-m^{\prime}(t))^s-ks(1-m^{\prime}(t))=k(1-s)m^{\prime}(t).
	\end{equation}
	On simplifying the above equation, we obtain
	\begin{eqnarray*}
		m^{\prime}(t)=1-\left(\frac{ks-k(1-s)}{s}\right)^{\frac{1}{s-1}},
	\end{eqnarray*}	a constant. This implies that $ m(t)$  is a linear function in $t$.
	Hence   $X$ has generalized power function distribution specified by (\ref{18}) (Elbatal et al. 2012).
\end{proof}
\section{Concluding remarks}
In this article, we study past entropy using a generating function approach.  Further, we established the relationships between past entropy-generating functions and reversed hazard rates.  Some characterization results are obtained. One can consider developing a goodness of fit test for power distribution using the characterization result obtained in this paper. Different extensions/modifications of measures of entropy and extropy are studied in the literature. We can study these measures using generating function approach.

\end{document}